\DeclareSymbolFont{bbold}{U}{bbold}{m}{n}
\DeclareSymbolFontAlphabet{\mathbbold}{bbold}
\newcommand{\R}{\mathbb{R}}
\newcommand{\C}{\mathbb{C}}
\renewcommand{\Re}{\operatorname{Re}}
\renewcommand{\l}{\mathrm{l}}
\renewcommand{\r}{\mathrm{r}}
\DeclareMathOperator{\tr}{tr}
\DeclareMathOperator{\Tr}{Tr}
\providecommand{\scpr}[2]{\left( #1 \,\middle|\, #2 \right)}
\providecommand{\dupa}[2]{\left\langle #1 , #2 \right\rangle}
\renewcommand{\sp}{\scpr}
\newcommand{\from}{\colon}
\let\phi\varphi
\let\leq\leqslant
\let\geq\geqslant
\def\@row#1,{#1\@ifnextchar;{\@gobble}{&\@row}}
\def\@matrix{%
    \expandafter\@row\my@arg,;%
    \@ifnextchar({\\ \get@in@paren{\@matrix}}{\after@matrix}%
    }
\def\matrixtype#1#2#3{%
    \ifmmode\def\after@matrix{\end{#2}\right#3}%
    \else\def\after@matrix{\end{#2}\right#3$}$\fi\iffalse$\fi
    \left#1\begin{#2}\get@in@paren{\@matrix}%
    }
\def\@column#1,{#1\@ifnextchar;{\@gobble}{\\ \@column}}
\newcommand\vect{}
\def\svect(#1){\left(\begin{smallmatrix}\@column#1,;\end{smallmatrix}\right)}
\def\vect{\get@in@paren{\@vect}}
\def\@vect{\left(\begin{matrix}\expandafter\@column\my@arg,;\end{matrix}\right)}
\def\get@in@paren#1({\def\my@arg{}\def\my@rest{}\def\after@get{#1}\get@arg}
\let\e@a\expandafter
\def\get@arg#1){\e@a\kl@test\my@rest#1(;}
\def\kl@test#1(#2;{\e@a\def\e@a\my@arg\e@a{\my@arg#1}%
                   \ifx:#2:\let\my@exec\after@get
                   \else\let\my@exec\get@arg
                        \e@a\def\e@a\my@arg\e@a{\my@arg(}%
                        \def@rest#2;%
                   \fi\my@exec}
\def\def@rest#1(;{\def\my@rest{#1\kl@zu}}
\def\kl@zu{)}
\newcommand\MyPairedDelimiter{%
  \@ifstar{\My@Paired@Delimiter{{}}}
          {\My@Paired@Delimiter{}}%
}
\newcommand\My@Paired@Delimiter[4]{%
  \newcommand#2{%
    \@ifstar{\start@PD{#1}{\delimitershortfall=-1sp}{#3}{#4}}
            {\start@PD{#1}{}{#3}{#4}}%
  }%
}
\newcommand\start@PD[5]{%
  #1\mathopen{\mathpalette\put@delim@helper{\put@delim{#2}{#3}{.}{#5}}}%
  #5%
  \mathclose{\mathpalette\put@delim@helper{\put@delim{#2}{.}{#4}{#5}}}%
}
\newcommand\put@delim@helper[2]{%
  \hbox{$\m@th\nulldelimiterspace=0pt #2#1$}%
}
\newcommand\put@delim[5]{%
  \setbox\z@\hbox{$\m@th#5{#4}$}%
  \setbox\tw@\null
  \ht\tw@\ht\z@ \dp\tw@\dp\z@
  #1#5%
  \left#2\box\tw@\right#3%
}
\MyPairedDelimiter*{\abs}{\lvert}{\rvert}
\MyPairedDelimiter*{\norm}{\lVert}{\rVert}
\MyPairedDelimiter{\set}{\{}{\}}
\newcommand\llim{
\mathchoice{\vcenter{\hbox{${\scriptstyle{-}}$}}}
{\vcenter{\hbox{$\scriptstyle{-}$}}}
{\vcenter{\hbox{$\scriptscriptstyle{-}$}}}
{\vcenter{\hbox{$\scriptscriptstyle{-}$}}}}
\newcommand\rlim{
\mathchoice{\vcenter{\hbox{${\scriptstyle{+}}$}}}
{\vcenter{\hbox{$\scriptstyle{+}$}}}
{\vcenter{\hbox{$\scriptscriptstyle{+}$}}}
{\vcenter{\hbox{$\scriptscriptstyle{+}$}}}}
\theoremstyle{plain} % default
\newtheorem{theorem}{Theorem}[section]
\newtheorem{corollary}[theorem]{Corollary}
\newtheorem{lemma}[theorem]{Lemma}
\theoremstyle{definition}
\newtheorem*{definition}{Definition}
\newtheorem{remark}[theorem]{Remark}
\newcommand{\Hmm}[1]{\leavevmode{\marginpar{\tiny%
$\hbox to 0mm{\hspace*{-0.5mm}$\leftarrow$\hss}%
\vcenter{\vrule depth 0.1mm height 0.1mm width \the\marginparwidth}%
\hbox to 0mm{\hss$\rightarrow$\hspace*{-0.5mm}}$\\\relax\raggedright #1}}}
\begin{document}

\title{The linearized Korteweg-de-Vries equation on general metric graphs}

\author{Christian Seifert}
\address{%
Technische Universit\"at Hamburg\\
Insitut f\"ur Mathematik\\
Am Schwarzenberg-Campus 3 (E)\\
21073 Hamburg\\
Germany}
\email{christian.seifert@tuhh.de}

\begin{abstract}
  We consider the linearized Korteweg-de-Vries equations, 
  sometimes called Airy equation, on general metric graphs with edge lengths bounded away from zero.
  We show that properties of the induced dynamics can be obtained by studying boundary operators in the corresponding boundary space induced by the vertices of the graph.
  In particular, we characterize unitary dynamics and contractive dynamics. 
  We demonstrate our results on various special graphs, including those recently treated in the literature.  
\end{abstract}

\subjclass{Primary 35Q53; Secondary 47B25, 81Q35}

\keywords{metric graphs, linearized KdV-equation, generators of $C_0$-semigroups}

\maketitle

\section{Introduction}

The Koreteweg-de-Vries equation \cite{KordeV95, Lannes13}
\[
\frac{\partial u}{\partial t}=\frac{3}{2}\sqrt{\frac{g}{\ell}}\left(\frac{\sigma}{3}\frac{\partial^3 u}{\partial x^3} +\frac{2\alpha}{3}\frac{\partial u}{\partial x}+\frac12 \frac{\partial u^2}{\partial x}\right)
\]
models shallow water waves in channels, where $u$ describes the elevation of the water w.r.t.\ the average water depth and $g$, $\ell$ , $\sigma$ and $\alpha$ are constants. 
Due to the last term on the right-hand side in becomes non-linear.
Assuming only small elevations $u$, i.e.\ $u$ is close to zero and/or long waves, i.e.\ $\frac{\partial u}{\partial x}$ is close to zero, the linear approximation neglecting the non-linearity (which is the linearization around the stationary solution $u=0$)
yields an equation of the form
\begin{equation}\label{eq:kdv_lin}
\frac{\partial u}{\partial t}= \alpha \frac{\partial^3 u}{\partial x^3} + \beta  \frac{\partial u}{\partial x}
\end{equation}
with appropriate constants $\alpha$ and $\beta$. In this paper we are going to study the evolution equation on general networks, i.e.\ metric graphs, from a functional analytic point of view.

Evolution equations (or, more general, differential operators) on metric graphs have have been intensively studied during the last two decades.
However, the focus was put on Schr\"odinger type operators and the corresponding heat and Schr\"odinger evolution equations, see \cite{BerkolaikoKuchment2013} and references therein.
Recently, also KdV-type equations on star graphs have gained interest, see \cite{DeconinckSheilsSmith16, SobAkKaJa15b, SobAkUe15a, SobAkUe15b, MugnoloNojaSeifert2017}.
Such star graphs give rise to model either singular interactions at one point, i.e.\ interface conditions, but can also be interpreted as models for junctions of channels.
The drawback of star graphs is that one only has exactly one vertex (i.e. junction), and the channels are modelled by halflines.

In this paper, we consider the linearized KdV-equation \eqref{eq:kdv_lin} on general metric graphs, i.e.\ we model a whole network, including channels of finite length.
We will describe the evolution equation in a functional analytic setup, namely in the framework of strongly continuous semigroups.
Thus, we are left to study the spacial operator describing the right-hand sinde of \eqref{eq:kdv_lin}.
The aim is to obtain a ``reasonable'' dynamics in an $L_2$-setting, meaning either unitary $C_0$-semigroups or contractive $C_0$-semigroups (which resembles the fact that the spacial derivatives appear only in odd order).
In order to do that we will employ the framework of boundary systems developped in \cite{SchSeiVoi15}.

In Section \ref{sec:NotationModel} we introduce the metric graph and the operator setup for the spacial derivatives.
Section \ref{sec:Krein} summarizes notions from Krein space theory which we will need to describe the right boundary conditions at the vertices.
We then focus on unitary and contractive dynamics in Section \ref{sec:Dynamics}.
In Section \ref{sec:LocalBC} we specialize our framework (which does not take the concrete graph structure into account) to the graph structure setting.
We end this paper by briefly listing some examples in Section \ref{sec:Examples}, where we refer to the corresponding literature, but also explaining new examples which have not been dealt with before.

\section{Notation and Model}
\label{sec:NotationModel}

Let $\Gamma = (V,E,a,b,\gamma_\l,\gamma_\r)$ be a metric graph, i.e.\ $V$ is the set of vertices of $\Gamma$, and $E$ is the set of edges of $\Gamma$.
Moreover, $a,b\from E\to [-\infty,\infty]$ such that $a_e<b_e$ for all $e\in E$ and each edge $e\in E$ is identified with the interval $(a_e,b_e)\subseteq \R$.
Let $E_\l:=\set{e\in E;\; a_e>-\infty}$ and $E_\r:=\set{e\in E;\; b_e<\infty}$ the sets of edges with finite starting and termination point, respectively, 
and let $\gamma_\l\from E_\l\to V$, $\gamma_\r\from E_\r\to V$ assign to each $e\in E_\l$ or $e\in E_\r$ the starting vertex $\gamma_\l(e)$ and the termination vertex $\gamma_\r(e)$, respectively.
Note that we do not assume $V$ or $E$ to be finite or countable.

We assume to have a positive lower bound on the edge lengths, i.e.
\begin{equation}
\label{eq:positive_edge_length}
\ell:= \inf_{e\in E} (b_e-a_e) > 0.
\end{equation}

For $k\in\set{0,1,2}$ we define $\tr_\l^k\from \bigoplus_{e\in E} W_2^{k+1}(a_e,b_e)\to \ell_2(E_\l)$ and $\tr_\r^k\from \bigoplus_{e\in E} W_2^{k+1}(a_e,b_e)\to \ell_2(E_\r)$ by
\[(\tr_\l^k u)(e) := u^{(k)}_e(a_e\rlim) \;\;(e\in E_\l), \quad (\tr_\r^k u)(e) := u^{(k)}_e(b_e\llim) \;\;(e\in E_\r).\]
Furthermore, define the trace maps $\Tr_\l\from \bigoplus_{e\in E} W_2^{3}(a_e,b_e)\to \ell_2(E_\l)^3$ and $\Tr_\r\from \bigoplus_{e\in E} W_2^{3}(a_e,b_e)\to \ell_2(E_\r)^3$ by
\[\Tr_\l u:= (\tr_\l^0 u, \tr_\l^1 u, \tr_\l^2 u),\quad \Tr_\r u:= (\tr_\r^0 u, \tr_\r^1 u, \tr_\r^2 u).\]

Let $\mathcal{H}_\Gamma:= \bigoplus_{e\in E} L_2(a_e,b_e)$ be the Hilbert space we are going to consider.

Let $(\alpha_e)_{e\in E}$ in $(0,\infty)$ be bounded and bounded away from zero, and $(b_e)_{e\in E}$ be bounded, 
and abbreviate $\alpha_\l:=(\alpha_e)_{e\in E_\l}$, $\alpha_\r:=(\alpha_e)_{e\in E_\r}$, $\beta_\l:=(\beta_e)_{e\in E_\l}$, and $\beta_\r:=(\beta_e)_{e\in E_\r}$. By the same symbol we will denote the corresponding multiplication operators in $\ell_2(E_\l)$ and $\ell_2(E_\r)$, respectively.

\begin{remark}
  One could choose $\alpha_e\in \R\setminus\{0\}$. However, for edges $e\in E$ with $\alpha_e<0$ we can just change the orientation of the edge (by setting $\tilde{\alpha}_e:=-\alpha_e$, $\tilde{\beta}_e:=-\beta_e$ and $\tilde{u}:=u(-\cdot)$).
  Thus, w.l.o.g.\ we may assume (as we will) $\alpha_e>0$ for all $e\in E$.
  
  Moreover, by scaling the variables appropriately, it would suffice to deal with the case $\beta_e = 0$ for all $e\in E$. In order to do this uniformly for all edges, one only needs boundedness of $(\beta_e)$ and $(\frac{1}{\alpha_e})$.
  However, we will keep the $\beta_e$'s possibly non-zero.
\end{remark}

\begin{definition}
  We define the \emph{minimal operator} $A_0$ in $\mathcal{H}_\Gamma$ by
  \begin{align*}
    D(A_0) & := \bigoplus_{e\in E} C_c^\infty(a_e,b_e) \\
    (A_0 u)_e & := \alpha_e \partial^3 u_e + \beta_e \partial u_e \quad(e\in E, u\in D(A_0)).
  \end{align*}
\end{definition}

\begin{definition}
  We define the \emph{maximal operator} $\hat{A}$ in $\mathcal{H}_\Gamma$ by
  \begin{align*}
    D(\hat{A}) & := \bigoplus_{e\in E} W_2^3(a_e,b_e) \\
    (\hat{A}u)_e & := \alpha_e \partial^3 u_e + \beta_e \partial u_e \quad(e\in E, u\in D(\hat{A})).
  \end{align*}
\end{definition}

Applying integration by parts we obtain the following lemma.

\begin{lemma}
  We have $-A_0^* = \hat{A}$.
\end{lemma}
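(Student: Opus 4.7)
The plan is to prove the two inclusions $\hat{A}\subseteq -A_0^*$ and $-A_0^*\subseteq \hat{A}$ separately. For the first, let $u\in D(A_0)$ and $v\in D(\hat{A})$ and compute the inner product $\langle A_0 u,v\rangle_{\mathcal{H}_\Gamma}$ edge by edge. Since each component $u_e\in C_c^\infty(a_e,b_e)$ vanishes together with all its derivatives near the endpoints, three integrations by parts on the $\alpha_e\partial^3$-term and one on the $\beta_e\partial$-term produce no boundary contributions, yielding
\[
\langle A_0 u,v\rangle = -\sum_{e\in E}\int_{a_e}^{b_e} u_e\,\overline{\alpha_e v_e''' + \beta_e v_e'} = \langle u,-\hat{A}v\rangle.
\]
Hence $v\in D(A_0^*)$ with $A_0^* v = -\hat{A}v$.

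For the converse, fix $v\in D(A_0^*)$ with $w:=A_0^* v\in \mathcal{H}_\Gamma$, and for each $e_0\in E$ test the defining identity $\langle A_0 u,v\rangle=\langle u,w\rangle$ against functions $u$ supported only in the single edge $e_0$ and extended by zero elsewhere. Using the definition of distributional derivative, this reduces to the scalar equation
\[
\alpha_{e_0}\, v_{e_0}''' + \beta_{e_0}\, v_{e_0}' = -w_{e_0}\qquad \text{in } \mathcal{D}'(a_{e_0},b_{e_0}).
\]
Since $\alpha_{e_0}>0$ and both $v_{e_0}$ and the right-hand side lie in $L_2(a_{e_0},b_{e_0})$, a standard regularity argument for this third-order constant-coefficient ODE (via Fourier transform on $\R$ together with extension/restriction to handle bounded or half-line edges) gives $v_{e_0}\in W_2^3(a_{e_0},b_{e_0})$ and $\alpha_{e_0}v_{e_0}'''+\beta_{e_0}v_{e_0}'=-w_{e_0}$ pointwise a.e.

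What remains, and what I expect to be the one subtle point, is to verify that the family $(v_e)_{e\in E}$ lies in the Hilbert-space direct sum defining $D(\hat{A})$, i.e.\ that $\sum_{e\in E}\|v_e\|_{W_2^3}^2<\infty$. This requires an \emph{edge-uniform} estimate of the form
\[
\|f\|_{W_2^3(a_e,b_e)}^2 \le C\bigl(\|f\|_{L_2(a_e,b_e)}^2 + \|\alpha_e f''' + \beta_e f'\|_{L_2(a_e,b_e)}^2\bigr)
\]
with $C$ independent of $e$. Such a constant exists thanks to the standing assumptions, namely the lower bound $\ell>0$ on edge lengths \eqref{eq:positive_edge_length}, the two-sided bound on $(\alpha_e)$ and the boundedness of $(\beta_e)$; one obtains it by interpolation between $\|f\|_{L_2}$ and $\|f'''\|_{L_2}$, combined with the invertibility of the symbol of $\alpha\partial^3+\beta\partial$ at high frequencies. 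Summing the estimate over $e\in E$ and using $v,\,w\in \mathcal{H}_\Gamma$ yields $v\in D(\hat{A})$ with $\hat{A}v=-w=-A_0^* v$, completing the proof.
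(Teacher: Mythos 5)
Your proof is correct, and it is considerably more than the paper offers: the paper justifies this lemma only with the single phrase ``Applying integration by parts we obtain the following lemma,'' which really only covers your first inclusion $\hat{A}\subseteq -A_0^*$ (no boundary terms arise because $u_e\in C_c^\infty(a_e,b_e)$). The substantive content of the lemma is the reverse inclusion, and you correctly isolate the two points where it could fail: (i) the edgewise regularity $v_e\in W_2^3(a_e,b_e)$ from $v_e\in L_2$ and $\alpha_e v_e'''+\beta_e v_e'\in L_2$ in the distributional sense, and (ii) the $\ell_2$-summability of $\|v_e\|_{W_2^3}^2$ over the (possibly infinite) edge set, which is exactly where the standing hypotheses enter --- the lower bound $\ell>0$ on edge lengths from \eqref{eq:positive_edge_length} gives an Ehrling-type interpolation inequality $\|f'\|_{L_2(I)}\le\eps\|f'''\|_{L_2(I)}+C(\eps,\ell)\|f\|_{L_2(I)}$ uniform over all intervals of length at least $\ell$, and together with $\inf_e\alpha_e>0$ and $\sup_e(\alpha_e+|\beta_e|)<\infty$ this yields your edge-uniform a priori estimate. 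One presentational remark: for step (i) the Fourier-transform-plus-extension route is slightly awkward on a finite or half-infinite edge; it is cleaner to integrate the equation once (the distribution $\alpha_e v_e''+\beta_e v_e$ has $L_2$ derivative, hence is an antiderivative of $-w_e$ up to a constant) and bootstrap $v_e\in W_2^2$, then $v_e\in W_2^3$. But this is a matter of taste, not a gap; your argument is complete where the paper is silent.
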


Define $F\from G(\hat{A})\to \ell_2(E_\r)^3 \oplus \ell_2(E_\l)^3$ by
\[F(u,\hat{A}u):=(\Tr_\r u, \Tr_\l u).\]

\begin{lemma}
  $F$ is linear and surjective.
\end{lemma}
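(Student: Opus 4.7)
Linearity is immediate: under the graph identification $(u, \hat A u) \leftrightarrow u$, the map $F$ is just $u \mapsto (\Tr_\r u, \Tr_\l u)$, and each component $\tr_\bullet^k$ is plainly linear. The substance of the lemma is surjectivity.

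For surjectivity my plan is an edgewise cutoff-and-Taylor construction, made uniform across $E$ by the standing assumption $\ell > 0$. I would fix once and for all a single bump function $\chi \in C_c^\infty(\R)$ with $\chi \equiv 1$ in a neighborhood of $0$ and $\spt \chi \subseteq (-\ell/3, \ell/3)$. Given target data $((v_\r^0, v_\r^1, v_\r^2), (v_\l^0, v_\l^1, v_\l^2)) \in \ell_2(E_\r)^3 \oplus \ell_2(E_\l)^3$, on each edge $e \in E$ I define $u_e := u_{\l, e} + u_{\r, e}$, where, if $e \in E_\l$,
\[
u_{\l, e}(x) := \chi(x - a_e)\Bigl(v_\l^0(e) + v_\l^1(e)(x - a_e) + \tfrac{1}{2}v_\l^2(e)(x - a_e)^2\Bigr),
\]
and analogously $u_{\r, e}$ is localized around $b_e$ using $\chi(b_e - x)$ and a polynomial in $x - b_e$ built from $v_\r^{0,1,2}(e)$; each summand is dropped when the respective endpoint is infinite. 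Because $b_e - a_e \ge \ell$, the supports of $u_{\l, e}$ and $u_{\r, e}$ are disjoint, so each contributes only at its own endpoint, and the choice of polynomial coefficients gives $\tr_\l^k u_e = v_\l^k(e)$ and $\tr_\r^k u_e = v_\r^k(e)$ for $k = 0, 1, 2$.

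The core step I would then establish is the uniform bound
\[
\|u_e\|_{L_2}^2 + \|(\hat A u)_e\|_{L_2}^2 \le C \sum_{k=0}^{2}\bigl(|v_\l^k(e)|^2 \1_{E_\l}(e) + |v_\r^k(e)|^2 \1_{E_\r}(e)\bigr),
\]
with a constant $C$ depending only on $\chi$, $\ell$, and the uniform bounds on $(\alpha_e)$ and $(\beta_e)$. This is a routine calculation, since each $u_e$ is supported in an interval of length at most $2\ell/3$ and the polynomial factors have explicit derivatives. Summing over $e \in E$ and using the $\ell_2$-hypothesis on the boundary data then yields $u \in \mathcal{H}_\Gamma$ and $\hat A u \in \mathcal{H}_\Gamma$, hence $u \in D(\hat A)$, and $F(u, \hat A u)$ equals the prescribed data by construction.

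The main (essentially only) obstacle is this uniformity: the edgewise construction is textbook on a single fixed interval, but if the cutoff width had to shrink with short edges the constant $C$ would blow up and the $\ell_2$-summation over $E$ would fail. This is exactly why $\ell > 0$ is imposed, and combined with boundedness of $(\alpha_e)$ and $(\beta_e)$ it supplies all the uniform control needed to conclude.
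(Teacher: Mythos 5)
Your proposal is correct and follows essentially the same route as the paper: a fixed cutoff times the Taylor polynomial at each finite endpoint, with the uniform constant supplied by the lower bound $\ell>0$ on edge lengths and the boundedness of $(\alpha_e)$ and $(\beta_e)$, followed by $\ell_2$-summation over the edges. The paper's version is just a more condensed statement of the same construction.
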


\begin{proof}
  Linearity of $F$ is clear.
  
  In order to show that $F$ is surjective first note that there exists $\varphi\in C_c^\infty[0,\ell)$, where $\ell$ is as in \eqref{eq:positive_edge_length}, such that
  $\varphi$ equals $1$ in a neighbourhood of $0$ and $\varphi(x) = 0$ for $x\geq \frac{\ell}{2}$.
  For $(t^0,t^1,t^2)\in \C^3$ let $u(x):= (t^0+t^1 x + \frac{1}{2}t^2 x^2)\varphi(x)$ for $x\in (0,\ell)$.
  Then there exists $c\geq 0$ (independent of $(t^0,t^1,t^2)$) such that $\norm{u}_{L_2(0,\ell)}^2+\norm{u'}_{L_2(0,\ell)}^2+\norm{u'''}_{L_2(0,\ell)}^2 \leq c\norm{(t^0,t^1,t^2)}_2^2$.
  Since $(\alpha_e)$ and $(\beta_e)$ are bounded, as a consequence, $F$ is surjective. 
\end{proof}

For $u,v\in D(\hat{A})$ we obtain by integration by parts
\begin{align}
\label{eq:integration_by_parts}
\sp{u}{\hat{A}v} + \sp{\hat{A}u}{v} = & -\sp{\begin{pmatrix} 
					      - \beta_\r & 0 & -\alpha_\r \\
                                              0 & \alpha_\r & 0 \\
                                              -\alpha_\r & 0 & 0 
                                            \end{pmatrix} \Tr_\r u}{\Tr_\r v} \nonumber\\
                                            & + \sp{\begin{pmatrix} 
					      - \beta_\l & 0 & -\alpha_\l \\
                                              0 & \alpha_\l & 0 \\
                                              -\alpha_\l & 0 & 0 
                                            \end{pmatrix} \Tr_\l u}{\Tr_\l v}.
\end{align}
Abbreviating 
\[B_{\l}:= \begin{pmatrix} 
	    - \beta_\l & 0 & -\alpha_\l \\
	    0 & \alpha_\l & 0 \\
	    -\alpha_\l & 0 & 0 
	  \end{pmatrix}, \qquad B_{\r}:= \begin{pmatrix} 
	    - \beta_\r & 0 & -\alpha_\r \\
	    0 & \alpha_\r & 0 \\
	    -\alpha_\r & 0 & 0 
	  \end{pmatrix},\]
and $\Omega\from G(\hat{A})\oplus G(\hat{A})\to \C$ by
\[\Omega\bigl((u,\hat{A} u),(v,\hat{A} v)\bigr) := \sp{\begin{pmatrix}
						  0 & 1 \\
						  1 & 0 
                                                 \end{pmatrix}\begin{pmatrix} u\\\hat{A}u\end{pmatrix}}{\begin{pmatrix} v\\\hat{A}v\end{pmatrix}},\]
and $\omega\from \ell_2(E_\r)^3\oplus \ell_2(E_\l)^3 \to \C$ by
\[\omega\bigl((x,y),(u,v)\bigr) := \sp{\begin{pmatrix} -B_\r & 0 \\ 0 & B_\l \end{pmatrix}\begin{pmatrix}x\\y\end{pmatrix} }{\begin{pmatrix} u\\v\end{pmatrix}},\]
rewriting \eqref{eq:integration_by_parts} we obtain
\begin{equation}
\label{eq:self-orthogonal} 
\Omega\bigl((u,\hat{A}u),(v,\hat{A}v)\bigr) = \omega\bigl(F(u,\hat{A}u), F(v,\hat{A}v)\bigr)
\end{equation}
for all $(u,\hat{A}u),(v,\hat{A}v)\in G(\hat{A})$.

Let $L$ be a densely defined linear operator from $\ell_2(E_\r)^3$ to $\ell_2(E_\l)^3$.
Then $A_0\subseteq A_L\subseteq \hat{A} = -A_0^*$ is defined by
\[D(A_L):= \set{u\in D(\hat{A});\; \Tr_\r u \in D(L),\, L\Tr_\r u = \Tr_\l u},\]
i.e.\ $G(A_L) = F^{-1}(G(L))$.

\section{Operators in Krein spaces}
\label{sec:Krein}

\begin{remark}
  Let $\mathcal{K}$ be a vector space and $\langle\cdot\mid \cdot\rangle$ an (indefinite) inner product on $\mathcal{K}$ such that $(\mathcal{K}, \langle\cdot\mid \cdot\rangle)$ is a Krein space.
  Then there exists an inner product $(\cdot\mid\cdot)$ on $\mathcal{K}$ such that $(\mathcal{K},(\cdot\mid\cdot))$ is a Hilbert space.
  Notions such as closedness or continuity are then defined by the underlying Hilbert space structure.
\end{remark}

\begin{definition}\label{defi:krein}
Let $\mathcal{K}_-$, $\mathcal{K}_+$ be Krein spaces, $\omega\colon \mathcal{K}_-\oplus \mathcal{K}_+ \times \mathcal{K}_-\oplus \mathcal{K}_+\to \C$ sesquilinear.
\begin{enumerate}
\item Let $X \subseteq \mathcal{K}_-\oplus \mathcal{K}_+$ be a subspace. Then $X$ is called
\emph{$\omega$-self-orthogonal} if $X = X^{\bot_\omega}$, where
\[X^{\bot_\omega}:= \{(x,y)\in \mathcal{K}_-\oplus \mathcal{K}_+;\;
\omega((x,y),(u,v)) = 0\hbox{ for all }(u,v)\in X\}.\]
\item Let $L$ be a densely defined linear operator from $\mathcal{K}_-$ to
$\mathcal{K}_+$. Then its \emph{$(\mathcal{K}_-,\mathcal{K}_+)$-adjoint}
$L^{\sharp}$ is defined by
\[
\begin{split}
D(L^{\sharp}) &:= \{y\in \mathcal{K}_+;\; \exists z\in \mathcal{K}_-\, \:
\langle Lx \mid y \rangle_+ = \langle x\mid z\rangle_-\hbox{ for all } x\in
D(L)\}\\
L^{\sharp}y&:=z.
\end{split}
\]
Clearly, $L^\sharp$ is then a linear operator from $\mathcal{K}_+$ to $\mathcal{K}_-$.
\item   Let $L$ be a linear operator from $\mathcal{K}_-$ to $\mathcal{K}_+$. Then $L$ is called a
\emph{$(\mathcal{K}_-,\mathcal{K}_+)$-contraction} if
  \[\langle Lx \mid Lx \rangle_+ \leq \langle x\mid x\rangle_- \quad\hbox{for all
}x\in D(L).\]

\item    Let $L$ be a linear operator from $\mathcal{K}_-$ to $\mathcal{K}_+$. Then $L$ is called
\emph{$(\mathcal{K}_-,\mathcal{K}_+)$-unitary} if $D(L)$ is dense, $R(L)$ is
dense, $L$ is injective, and finally $L^{\sharp} = L^{-1}$.  
\end{enumerate}
\end{definition}
If $\mathcal K_-,\mathcal K_+$ are  Hilbert spaces, then obviously $(\mathcal{K}_-,\mathcal{K}_+)$-adjoint (-contraction, -unitary) operators are the usual objects of Hilbert space operator theory.

\begin{remark}
  Let $\mathcal{K}_\pm$ be Krein spaces.
  Let $L$ be $(\mathcal{K}_-,\mathcal{K}_+)$-unitary. Then
  \begin{equation}\label{eq:defunit}
  \langle Lx \mid Ly \rangle_+ = \langle x\mid y\rangle_- \quad\hbox{for all
  }x,y\in D(L).
  \end{equation}
  However, $L$ may not be bounded.
\end{remark}

Note that $\ell_2(E_\l)^3$ equipped with $\dupa{\cdot}{\cdot}_\l\from \ell_2(E_\l)^3\times \ell_2(E_\l)^3\to\C$,
\[\dupa{(x^0,x^1,x^2)}{(y^0,y^1,y^2)}_\l:=\sp{B_\l(x^0,x^1,x^2)}{(y^0,y^1,y^2)}_{\ell_2(E_\l)^3}\]
yields a Krein space $\mathcal{K}_\l:=(\ell_2(E_\l)^3, \dupa{\cdot}{\cdot}_\l)$.
Analogously, $\ell_2(E_\r)^3$ equipped with $\dupa{\cdot}{\cdot}_\r\from \ell_2(E_\r)^3\times \ell_2(E_\r)^3\to\C$,
\[\dupa{(x^0,x^1,x^2)}{(y^0,y^1,y^2)}_\r:=\sp{B_\r(x^0,x^1,x^2)}{(y^0,y^1,y^2)}_{\ell_2(E_\r)^3}\]
yields a Krein space $\mathcal{K}_\r:=(\ell_2(E_\r)^3, \dupa{\cdot}{\cdot}_\r)$.

\section{Dynamics}
\label{sec:Dynamics}

We now study different types of dynamics for the equation.

\subsection{Generating unitary groups}

We are now interested in generators of unitary groups.
By Stone's Theorem this is equivalent to look for skew-self-adjoint realizations of $A_L$.

\begin{theorem}\label{thm:skew-self-adjoint}
  Let $L$ be a linear operator from $\ell_2(E_\r)^3$ to $\ell_2(E_\l)^3$ such that $D(L)$ and $R(L)$ are dense.
  Then $A_L$ is skew-self-adjoint if and only if $L$ is $(\mathcal{K}_\r,\mathcal{K}_\l)$-unitary.
\end{theorem}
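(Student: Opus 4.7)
My plan is to reduce the theorem through two bijective correspondences: skew-self-adjointness of $A_L$ is equivalent to $\Omega$-self-orthogonality of $G(A_L)$ inside $G(\hat{A})$, which via the boundary map $F$ is equivalent to $\omega$-self-orthogonality of $G(L)$, which in turn is equivalent to $L$ being $(\mathcal{K}_\r,\mathcal{K}_\l)$-unitary.

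For the first reduction, I would use $A_0 \subseteq A_L \subseteq \hat{A}$ together with $\hat{A} = -A_0^*$. Taking adjoints forces $A_L^* \subseteq A_0^* = -\hat{A}$, so $v \in D(-A_L^*)$ precisely when $v \in D(\hat{A})$ and $(\hat{A}u \mid v) + (u \mid \hat{A}v) = 0$ for every $u \in D(A_L)$. In view of the definition of $\Omega$, this identity says $(v,\hat{A}v) \in G(A_L)^{\perp_\Omega}$. Therefore $G(-A_L^*) = G(A_L)^{\perp_\Omega}$ (with orthogonal taken inside $G(\hat{A})$), and the condition $A_L = -A_L^*$ translates exactly into $G(A_L) = G(A_L)^{\perp_\Omega}$. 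Density of $D(A_L)$, required to form $A_L^*$, is automatic from $D(A_0) \subseteq D(A_L)$.

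The second reduction is a direct push-forward along $F$ using \eqref{eq:self-orthogonal}. Since $F$ is surjective and $G(A_L) = F^{-1}(G(L))$, we have $F(G(A_L)) = G(L)$, and the identity $\Omega = \omega \circ (F \times F)$ yields $G(A_L)^{\perp_\Omega} = F^{-1}(G(L)^{\perp_\omega})$. Applying $F$ once more and invoking surjectivity to cancel the preimages, the equation $G(A_L) = G(A_L)^{\perp_\Omega}$ is equivalent to $G(L) = G(L)^{\perp_\omega}$.

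For the final identification I would unwind the definition of $\omega$ and the Krein inner products. A pair $(x,y) \in \mathcal{K}_\r \oplus \mathcal{K}_\l$ lies in $G(L)^{\perp_\omega}$ if and only if $\langle Lu \mid y\rangle_\l = \langle u \mid x\rangle_\r$ for all $u \in D(L)$, which by Definition \ref{defi:krein}(b) is exactly $y \in D(L^\sharp)$ with $L^\sharp y = x$. Thus $G(L)^{\perp_\omega}$ is the (flipped) graph of $L^\sharp$. Equating it with $G(L)$ gives $u = L^\sharp L u$ for all $u \in D(L)$ and conversely that every $y \in D(L^\sharp)$ has the form $y = Lu$ with $u = L^\sharp y$. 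Together with the assumed density of $D(L)$ and $R(L)$, this says precisely that $L$ is injective with $L^{-1} = L^\sharp$, i.e.\ $L$ is $(\mathcal{K}_\r,\mathcal{K}_\l)$-unitary. The main delicacy, and the only place I expect any real bookkeeping, is in this last step: one must track which of the two inclusions $L L^\sharp \supseteq \id_{D(L^\sharp)}$ and $L^\sharp L \supseteq \id_{D(L)}$ comes from which inclusion in $G(L) = G(L)^{\perp_\omega}$, and verify that the density hypotheses are used precisely to make $L^\sharp$ well-defined and to match domains and ranges across the two sides.
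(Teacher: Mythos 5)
Your proposal is correct and follows essentially the same route as the paper: the paper's proof consists of exactly your chain of equivalences, citing \cite{SchSeiVoi15} (Corollary 2.3 and Example 2.7(b)) for the outer two steps ($A_L$ skew-self-adjoint $\Leftrightarrow$ $G(A_L)$ $\Omega$-self-orthogonal, and $G(L)$ $\omega$-self-orthogonal $\Leftrightarrow$ $L$ Krein-unitary) and proving only your middle step via \eqref{eq:self-orthogonal} and the surjectivity of $F$. You have simply written out in full the details that the paper delegates to the boundary-systems references, and your bookkeeping in each step is sound.
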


\begin{proof}
  By \cite[Corollary 2.3 and Example 2.7.(b)]{SchSeiVoi15}, see also \cite[Theorem 3.7]{MugnoloNojaSeifert2017},
  we have to show that $G(A_L)$ is $\Omega$-self-orthogonal if and only if $G(L)$ is $\omega$-self-orthogonal. But this is an easy consequence of \eqref{eq:self-orthogonal} and the definition of $A_L$.
\end{proof}

\subsection{Generating contraction semigroups}

Instead of unitary dynamics we now ask for generators of contraction semigroups.

\begin{theorem}
\label{thm:contractive}
  Let $L$ be a densely defined closed linear operator from $\ell_2(E_\r)^3$ to $\ell_2(E_\l)^3$.
  Then $A_L$ is the generator of a semigroup of contractions if and only if $L$ is $\bigl(\mathcal{K}_\r,\mathcal{K}_\l\bigr)$-contractive
  and $L^\sharp$ is $\bigl(\mathcal{K}_\l,\mathcal{K}_\r\bigr)$-contractive. 
\end{theorem}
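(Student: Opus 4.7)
The plan is to use the Lumer--Phillips theorem in the form: a densely defined, closed operator generates a $C_0$-semigroup of contractions if and only if both it and its adjoint are dissipative. Since $L$ is closed and the trace maps are bounded in the graph norm of $\hat{A}$, $A_L$ is closed, and since $A_0 \subseteq A_L$, it is densely defined. So the task reduces to translating the two dissipativity conditions into statements about $L$ and $L^\sharp$.

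For dissipativity of $A_L$, I would simply evaluate the identity \eqref{eq:self-orthogonal} on the diagonal: for $u \in D(A_L)$ the boundary condition $L\Tr_\r u = \Tr_\l u$ yields
\[
2\Re\sp{u}{A_L u} = \dupa{L\Tr_\r u}{L\Tr_\r u}_\l - \dupa{\Tr_\r u}{\Tr_\r u}_\r.
\]
Because $F$ is surjective, $\Tr_\r$ sends $D(A_L)$ onto $D(L)$, so $A_L$ is dissipative precisely when $L$ is a $(\mathcal{K}_\r,\mathcal{K}_\l)$-contraction.

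The more delicate step is identifying $A_L^*$. From $A_0 \subseteq A_L$ and $-A_0^* = \hat{A}$ one has $A_L^* \subseteq A_0^* = -\hat{A}$, so every $v \in D(A_L^*)$ automatically lies in $D(\hat{A})$ with $A_L^* v = -\hat{A} v$. Plugging this into \eqref{eq:self-orthogonal} and letting $u$ range over $D(A_L)$ so that $F(u,\hat{A} u) = (x, Lx)$ with arbitrary $x \in D(L)$, the adjoint relation $\sp{A_L u}{v} = \sp{u}{-\hat{A} v}$ rearranges to
\[
\dupa{Lx}{\Tr_\l v}_\l = \dupa{x}{\Tr_\r v}_\r \qquad (x \in D(L)),
\]
which is exactly the defining relation for $\Tr_\l v \in D(L^\sharp)$ together with $L^\sharp \Tr_\l v = \Tr_\r v$. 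Evaluating the boundary form on the diagonal for $A_L^*$ and substituting then gives
\[
2\Re\sp{v}{A_L^* v} = \dupa{L^\sharp \Tr_\l v}{L^\sharp \Tr_\l v}_\r - \dupa{\Tr_\l v}{\Tr_\l v}_\l,
\]
and another use of surjectivity of $F$ shows that $A_L^*$ is dissipative iff $L^\sharp$ is a $(\mathcal{K}_\l, \mathcal{K}_\r)$-contraction, which combined with the previous equivalence closes the argument.

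The main obstacle is the careful identification of $D(A_L^*)$ via $L^\sharp$, which relies both on density of $D(L)$ (so that the pairing identity genuinely pins $\Tr_\r v$ to $L^\sharp \Tr_\l v$) and on the surjectivity of the boundary map $F$ (so that one has enough test functions on both sides). In practice I would, as in the proof of Theorem \ref{thm:skew-self-adjoint}, invoke the contractive counterpart of \cite[Corollary 2.3 and Example 2.7.(b)]{SchSeiVoi15} rather than re-derive this bookkeeping by hand, since the abstract boundary-systems framework of that paper already encodes exactly this translation between dissipativity of the operator realisation and Krein-contractivity of the boundary operator.
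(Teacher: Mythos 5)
Your proposal is correct and follows essentially the same route as the paper: closedness of $A_L$ from closedness of $L$, the on-diagonal evaluation of the boundary identity \eqref{eq:self-orthogonal} to equate dissipativity of $A_L$ with $(\mathcal{K}_\r,\mathcal{K}_\l)$-contractivity of $L$, the identification $D(A_L^*)=\{v\in D(\hat{A});\,\Tr_\l v\in D(L^\sharp),\,L^\sharp\Tr_\l v=\Tr_\r v\}$ (which the paper asserts as ``easy to see'' and you derive explicitly, correctly using density of $D(L)$ and surjectivity of $F$), and finally the Lumer--Phillips theorem in Hilbert spaces.
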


\begin{proof}
  Note that since $L$ is closed, also $A_L$ is closed.
  
  First, it is easy to see that $A_L^*$ is given by
  \begin{align*}
    D(A_L^*) & = \Bigl\{u\in D(A_0^*);\; \Tr_\l u \in D(L^\sharp),\, L^\sharp \Tr_\l u = \Tr_\r u\Bigr\},\\
    A_L^* u & = A_0^* u.
  \end{align*}

  For $u\in D(A_L)$ we compute
  \begin{align*}
    2\Re \sp{A_L u}{u} & = \Omega\bigl((u,A_L u),(u,A_L u)\bigr) \\
    & = \omega\bigl(F(u,A_Lu),F(u,A_Lu)\bigr) \\
    & = \omega\bigl((\Tr_\r u, L\Tr_\r u),(\Tr_\r u, L\Tr_\r u)\bigr) \\
    & = -\dupa{\Tr_\r u}{\Tr_\r u}_\r + \dupa{L \Tr_\r u}{L \Tr_\r u}_\l.
  \end{align*}
  Hence, $A_L$ is dissipative, i.e.\ $\Re \sp{A_L u}{u}\leq 0$ for all $u\in D(A_L)$, if and only if $L$ is $\bigl(\mathcal{K}_\r,\mathcal{K}_\l\bigr)$-contractive.
  
  Similarly, $A_L^*$ is dissipative if and only if $L^\sharp$ is $\bigl(\mathcal{K}_\l,\mathcal{K}_\r\bigr)$-contrac\-tive.
  
  Thus, the Lumer-Phillips Theorem in Hilbert spaces yields the assertion.
\end{proof}

\section{Local boundary conditions}
\label{sec:LocalBC}

So far, we did not take into account the graph structure.
Now, we ask for boundary conditions at each vertex $v\in V$ separately. 
For $v\in V$ let
\[E_{\l,v}:= \set{e\in E_\l;\; \gamma_\l(e) = v},\quad E_{\r,v}:= \set{e\in E_\r;\; \gamma_\r(e) = v}.\]
Then $\ell_2(E_{\l,v})$ equipped with $\dupa{\cdot}{\cdot}_{\l,v}:=\dupa{\cdot}{\cdot}_{\l}|_{\ell_2(E_{\l,v})^3\times \ell_2(E_{\l,v})^3}$
 and $\ell_2(E_{\r,v})$ equipped with $\dupa{\cdot}{\cdot}_{\r,v}:=\dupa{\cdot}{\cdot}_{\r}|_{\ell_2(E_{\r,v})^3\times \ell_2(E_{\r,v})^3}$ 
yield Krein spaces $\mathcal{K}_{\l,v}$ and $\mathcal{K}_{\r,v}$, respectively, such that
\[\dupa{\cdot}{\cdot}_{\l} = \sum_{v\in V} \dupa{\cdot}{\cdot}_{\l,v}\]
and analogously for $\dupa{\cdot}{\cdot}_{\r}$.
	  
For $v\in V$ let $L_v$ be a densely defined linear operator from $\ell_2(E_{\r,v})^3$ to $\ell_2(E_{\l,v})^3$, and define
$A_0\subseteq A_{(L_v)_{v\in V}}\subseteq \hat{A} = -A_0^*$ by
\begin{align*}
  D(A_L) := \Bigl\{u\in D(\hat{A});\; \forall\,v\in V: &\; (\Tr_\r u)|_{E_{\r,v}} \in D(L_v),\\
  &\;  L_v \bigl((\Tr_\r u)|_{E_{\r,v}}\bigr) = \Tr_\l u|_{E_{\l,v}}\Bigr\}.
\end{align*}

For the case of unitary dynamics we obtain the following corollary.

\begin{corollary}
  For $v\in V$ let $L_v$ be a densely defined linear operator from $\ell_2(E_{\r,v})^3$ to $\ell_2(E_{\l,v})^3$.
  Then $A_{(L_v)_{v\in V}}$ is the generator of a unitary group if and only if $L_v$ is 
  $(\mathcal{K}_{\r,v},\mathcal{K}_{\l,v})$-unitary for all $v\in V$.
\end{corollary}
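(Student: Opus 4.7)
The plan is to reduce to Theorem~\ref{thm:skew-self-adjoint} by assembling the family $(L_v)_{v\in V}$ into a single operator $L$ from $\ell_2(E_\r)^3$ to $\ell_2(E_\l)^3$. Since $\gamma_\r$ and $\gamma_\l$ give disjoint decompositions $E_\r = \bigsqcup_{v\in V} E_{\r,v}$ and $E_\l = \bigsqcup_{v\in V} E_{\l,v}$, one obtains orthogonal Krein space decompositions $\mathcal{K}_\r = \bigoplus_{v\in V} \mathcal{K}_{\r,v}$ and $\mathcal{K}_\l = \bigoplus_{v\in V} \mathcal{K}_{\l,v}$; the identity $\dupa{\cdot}{\cdot}_\l = \sum_v \dupa{\cdot}{\cdot}_{\l,v}$ and its right-handed analogue say that the Krein inner products themselves decouple across vertices.

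Accordingly, I would set $L := \bigoplus_{v\in V} L_v$ in the natural sense: $D(L)$ consists of those $x\in \ell_2(E_\r)^3$ with $x|_{E_{\r,v}}\in D(L_v)$ for every $v\in V$ and $\sum_v \norm{L_v\, x|_{E_{\r,v}}}^2 < \infty$, and $(Lx)|_{E_{\l,v}} := L_v(x|_{E_{\r,v}})$. Unfolding the definition of $A_{(L_v)_{v\in V}}$ and noting that the summability condition is automatic (since $\Tr_\l u \in \ell_2(E_\l)^3$), one sees at once that $A_{(L_v)_{v\in V}} = A_L$. By Theorem~\ref{thm:skew-self-adjoint}, the corollary then reduces to the purely Krein-theoretic claim that $L$ is $(\mathcal{K}_\r,\mathcal{K}_\l)$-unitary if and only if each $L_v$ is $(\mathcal{K}_{\r,v},\mathcal{K}_{\l,v})$-unitary.

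This equivalence I would verify componentwise, checking each of the four clauses of Definition~\ref{defi:krein}(d). Density of $D(L)$ and of $R(L)$, and injectivity of $L$, each pass to and from the components by projecting onto a single vertex coordinate and, in the reverse direction, by truncating to finitely many vertices and approximating within each component. For the adjoint identity, the decoupling of $\dupa{\cdot}{\cdot}_\r$ and $\dupa{\cdot}{\cdot}_\l$, combined with testing $y\in D(L^\sharp)$ against elements $x\in D(L)$ supported in a single $E_{\r,v}$, yields $L^\sharp = \bigoplus_v L_v^\sharp$; the analogous (but easier) statement $L^{-1} = \bigoplus_v L_v^{-1}$ reduces $L^\sharp = L^{-1}$ to the componentwise identities $L_v^\sharp = L_v^{-1}$.

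The main obstacle will be the careful bookkeeping with $\ell_2$-summability when $V$ is infinite: one has to ensure that the summability built into $D(L^\sharp)$ matches the summability required to assemble the local inverses $L_v^{-1}$ into a well-defined global operator on $R(L)$, and that direct-summing local Krein adjoints really produces the Krein adjoint of the direct sum in this unbounded, possibly uncountable setting. Once both the Krein adjoint and the inverse are shown to commute with the direct sum decomposition, the desired equivalence follows immediately from Theorem~\ref{thm:skew-self-adjoint}.
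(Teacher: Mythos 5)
Your proposal follows essentially the same route as the paper: assemble the $L_v$ into the block-diagonal operator $L=\bigoplus_v L_v$, reduce to Theorem~\ref{thm:skew-self-adjoint}, and verify the four clauses of $(\mathcal{K}_\r,\mathcal{K}_\l)$-unitarity componentwise via $(L_v)_{v\in V}^\sharp=(L_v^\sharp)_{v\in V}$. Your version is in fact somewhat more careful than the paper's, which simply asserts the componentwise equivalences and the identification of the adjoint without addressing the $\ell_2$-summability bookkeeping you rightly flag.
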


\begin{proof}
  We show that $(L_v)_{v\in V}$ is $(\mathcal{K}_\r,\mathcal{K}_\l)$-unitary if and only if $L_v$ is $(\mathcal{K}_{\r,v},\mathcal{K}_{\l,v})$-unitary for all $v\in V$.
  Then the result follows from Theorem \ref{thm:skew-self-adjoint}.
  Note that $(L_v)_{v\in V}$ acts as a block-diagonal operator with block $L_v$ from $\ell_2(E_{\r,v})^3$ to $\ell_2(E_{\l,v})^3$ for all $v\in V$.
  Hence, clearly, $(L_v)_{v\in V}$ is densely defined with dense range if and only if $L_v$ is densely defined with dense range for all $v\in V$.
  Moreover, $(L_v)$ is injective if and only if $L_v$ is injective for all $v\in V$.
  Since $(L_v)_{v\in V}^\sharp = (L_v^\sharp)_{v\in V}$ we also obtain $(L_v)^\sharp = (L_v)^{-1}$ if and only if $L_v^\sharp = L_v^{-1}$ for all $v\in V$.
\end{proof}

Analogously, in case of contractive dynamics we have the following result.

\begin{corollary}
  For $v\in V$ let $L_v$ be a densely defined linear operator from $\ell_2(E_{\r,v})^3$ to $\ell_2(E_{\l,v})^3$ such that $L:=(L_v)_{v\in V}$ is closed.
  Then $A_{(L_v)_{v\in V}}$ is the generator of a semigroup of contractions if and only if $L_v$ is 
  $(\mathcal{K}_{\r,v},\mathcal{K}_{\l,v})$-contractive and $L_v^\sharp$ is $(\mathcal{K}_{\l,v},\mathcal{K}_{\r,v})$-contractive
  for all $v\in V$.
\end{corollary}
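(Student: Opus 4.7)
The plan is to reduce to Theorem~\ref{thm:contractive} applied to $L:=(L_v)_{v\in V}$, viewed as a block-diagonal operator from $\bigoplus_v \ell_2(E_{\r,v})^3 = \ell_2(E_\r)^3$ to $\bigoplus_v \ell_2(E_{\l,v})^3 = \ell_2(E_\l)^3$, exactly mirroring the strategy of the previous corollary. The hypothesis gives $L$ densely defined (as each $L_v$ is densely defined and dense-definedness is preserved under orthogonal direct sum) and closed, so Theorem~\ref{thm:contractive} applies, and the task is to show that the two global Krein-space contractivity conditions on $L$ and $L^\sharp$ decouple into the corresponding vertex-wise conditions.

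First I would record that for $x=(x_v)_{v\in V}\in D(L)$, the identity $\dupa{\cdot}{\cdot}_\r=\sum_{v\in V}\dupa{\cdot}{\cdot}_{\r,v}$ (and its $\l$-counterpart) together with the block-diagonal action $Lx=(L_vx_v)_{v\in V}$ gives
\[
\dupa{x}{x}_\r - \dupa{Lx}{Lx}_\l \;=\; \sum_{v\in V}\bigl(\dupa{x_v}{x_v}_{\r,v} - \dupa{L_vx_v}{L_vx_v}_{\l,v}\bigr).
\]
If each $L_v$ is $(\mathcal{K}_{\r,v},\mathcal{K}_{\l,v})$-contractive, every summand is nonnegative, so $L$ is $(\mathcal{K}_\r,\mathcal{K}_\l)$-contractive. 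Conversely, given $v_0\in V$ and $x_{v_0}\in D(L_{v_0})$, the vector which equals $x_{v_0}$ at $v_0$ and $0$ elsewhere lies in $D(L)$ (since $0\in D(L_v)$ for every $v$), and applying $(\mathcal{K}_\r,\mathcal{K}_\l)$-contractivity to this element isolates the vertex-$v_0$ inequality. This establishes the equivalence for $L$.

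Next I would verify that the Krein-space adjoint is itself block-diagonal, i.e.\ $L^\sharp=(L_v^\sharp)_{v\in V}$; this follows directly from the definition of $(\mathcal{K}_\r,\mathcal{K}_\l)$-adjoint, since pairing a single-vertex test vector $y=(y_v)$ with $Lx$ only couples to the block at $v$, so the existence of an element implementing the adjoint relation globally is equivalent to its existence at each vertex. With this in hand, the same summation argument as in the preceding paragraph, now applied to $L^\sharp$, shows $(\mathcal{K}_\l,\mathcal{K}_\r)$-contractivity of $L^\sharp$ is equivalent to $(\mathcal{K}_{\l,v},\mathcal{K}_{\r,v})$-contractivity of each $L_v^\sharp$. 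Combining both equivalences with Theorem~\ref{thm:contractive} finishes the proof.

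The only non-routine point will be the identification $L^\sharp=(L_v^\sharp)_{v\in V}$, and in particular matching the precise domain on both sides; this is the same minor bookkeeping encountered in the previous corollary, and no new analytic input is needed.
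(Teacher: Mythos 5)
Your proposal is correct and follows essentially the same route as the paper: both reduce to Theorem~\ref{thm:contractive} by exploiting the block-diagonal structure of $L=(L_v)_{v\in V}$, the decomposition $\dupa{\cdot}{\cdot}_{\l}=\sum_{v}\dupa{\cdot}{\cdot}_{\l,v}$ (and its $\r$-counterpart), and the identification $L^\sharp=(L_v^\sharp)_{v\in V}$. Your version merely spells out the summation and single-vertex test-vector details that the paper leaves implicit.
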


\begin{proof}
  We show that $(L_v)_{v\in V}$ is $(\mathcal{K}_\r,\mathcal{K}_\l)$-con\-tractive and $((L_v)_{v\in V})^\sharp$ 
  is $(\mathcal{K}_\l,\mathcal{K}_\r)$-con\-tractive if and only if $L_v$ is 
  $(\mathcal{K}_{\r,v},\mathcal{K}_{\l,v})$-con\-tractive and
  $L_v^\sharp$ is $(\mathcal{K}_{\l,v},\mathcal{K}_{\r,v})$-con\-tractive
  for all $v\in V$.
  Then the result follows from Theorem \ref{thm:contractive}.  
  Again, $(L_v)_{v\in V}$ acts as a block-diagonal operator with block $L_v$ from $\ell_2(E_{\r,v})^3$ to $\ell_2(E_{\l,v})^3$ for all $v\in V$.
  Hence, $(L_v)$ is $(\mathcal{K}_\r,\mathcal{K}_\l)$-con\-tractive if and only if $L_v$ is $(\mathcal{K}_{\r,v},\mathcal{K}_{\l,v})$-con\-tractive for all $v\in V$.
  Since $(L_v)^{\sharp} = (L_v^\sharp)$, we have that $(L_v)^\sharp$ is $(\mathcal{K}_\l,\mathcal{K}_\r)$-con\-tractive if and only if 
  $L_v^\sharp$ is $(\mathcal{K}_{\l,v},\mathcal{K}_{\r,v})$-con\-tractive
  for all $v\in V$.  
\end{proof}

\section{Examples}
\label{sec:Examples}

In this section we will specialize to particular examples of graphs. For those secial cases already treated in the literture we just explain the setup.
We ask the reader to go to the corresponding references for more details in these cases.

\subsection{Two semi-infinite edges}

The case of two semi-infinite edges correspond to $\abs{V} = 1$, $\abs{E} = 2$ and the two edges correspond to the intervals $(-\infty,0]$ and $[0,\infty)$.

\begin{center}
\begin{tikzpicture}[scale=0.75]
  \foreach \a in {0,180} {
    \draw[rotate=\a] (0,0)--(3,0);
    \draw[rotate=\a, dashed] (3,0)--(5,0);
  }  
  \draw[fill] (0,0) circle(0.05) node[below]{$v$};
\end{tikzpicture}
\end{center}

If the coefficients $(\alpha_e)$ and $(\beta_e)$ are constant, we can interpret the equation as the linearized KdV equation on the real line with a generalized point interaction at $0$ (which corresponds to the vertex $v$).
This situation was considered in \cite{DeconinckSheilsSmith16}.

As a particular example, if $\alpha_e = 1$ and $\beta_e = 0$ for all $e$, then $L:=\begin{pmatrix}
										    1 & 0 & 0 \\
										    \sqrt{2} & 1 & 0 \\
										    1 & \sqrt{2} & 1
                                                                                   \end{pmatrix}$ yields a unitary dynamics,
since $L\from \mathcal{K}_\r\to \mathcal{K}_\l$ is $(\mathcal{K}_\r,\mathcal{K}_\l)$-unitary, i.e. $L$ is bijective and
\[\dupa{Lx}{Ly}_{\l} = \dupa{x}{y}_\r \quad(x,y\in \mathcal{K}_\r).\]

\subsection{Star graphs}

The special case of star graphs were considered in \cite{SobAkUe15a, SobAkUe15b, SobAkKaJa15b, MugnoloNojaSeifert2017}.
Here, $\abs{V} = 1$, each edge corresponds to a semi-infinite interval, $(-\infty,0]$ or $[0,\infty)$, say, and each edge is adjacent to $v$ with its endpoint corresponding to the value $0$ for the interval.

\begin{center}
\begin{tikzpicture}[scale=0.75]
  \foreach \a in {-30,-15,0,15,30,150,165,180,195,210} {
    \draw[rotate=\a] (0,0)--(3,0);
    \draw[rotate=\a, dashed] (3,0)--(5,0);
  }  
  \draw[fill] (0,0) circle(0.05) node[below]{$v$};
\end{tikzpicture}
\end{center}

Star graphs generalize graphs with two semi-infinite edges to more than two edges.

\subsection{A Loop}

Let $\Gamma$ be a loop, i.e.\ $\abs{V} = \abs{E} = 1$, and the edge corresponds to the interval $[0,1]$ and both endpoints of the edge are attached to the vertex $v$.

\begin{center}
\begin{tikzpicture}[scale=0.75]
  \draw (0,0) circle(2); 
  \draw[fill] (-2,0) circle(0.05) node[left]{$v$};
\end{tikzpicture}
\end{center}

Here, we can model generalized periodic boundary conditions.
Indeed, consider $L$ to be represented by the identity matrix in the usual basis. Then this results in periodic boundary conditions
\[u^{(k)}(0\rlim) = u^{(k)}(1\llim) \quad(k\in\{0,1,2\}).\]
Moreover, $L$ becomes Krein space unitary in this case, so the dynamics is unitary.

\end{document}